\newtheorem{thm}{Theorem}[section]
\newtheorem{prop}[thm]{Proposition}
\newtheorem{defn}[thm]{Definition}
\newtheorem{prps}[thm]{Properties}
\newtheorem{rmk}[thm]{Remark}
\theoremstyle{plain}
\numberwithin{equation}{thm}
\numberwithin{equation}{section}
\theoremstyle{remark}
\def\phi{\varphi}
\title[Correlation Angles and Inner Products]{Correlation Angles and Inner Products: \\
Application to a problem from Physics}
\author{Adam Towsley}
\address{
Adam Towsley\\
Department of Mathematics\\
University of Rochester\\
Rochester, NY 14627 \\
}
\email{towsley@math.rochester.edu}
\author{Jonathan Pakianathan}
\address{
Jonathan Pakianathan\\
Department of Mathematics\\
University of Rochester\\
Rochester, NY 14627 \\
}
\email{jonpak@math.rochester.edu}
\author{David H. Douglass}
\address{
David H. Douglass\\
Department of Physics and Astronomy
University of Rochester\\
Rochester, NY 14627 \\
}
\email{douglass@pas.rochester.edu}
\begin{document}
\hfill (Accepted by ISRN Applied Mathematics. August 11, 2011.)\\\\
\begin{abstract}
Covariance is used as an inner product on a formal vector space built on $n$
random variables to define measures of correlation $M_d$ across a set of vectors in a $d$-dimensional space. For $d=1$, one has the diameter; for $d=2$, one has an area. These concepts are directly applied to correlation studies in climate science.

\end{abstract}

\maketitle

\section{Introduction}
In a study of the earth's climate system Douglass \cite{Douglass2010} considered the correlation among a set of $N$ climate indices. A distance $d$ between two indices $i$ and $j$ was defined as 
\begin{equation} d_{ij} \left( t \right) = \cos^{-1}\left( \vert \phi_{ij}\left( t \right) \vert \right) 
\end{equation}
where $\phi_{ij}$ is the Pearson correlation coefficient. It was stated that $d$ satisfies the conditions to be a metric. The measure of correlation, or closeness, among the $N$ indices was taken to be the diameter $D$. 
\begin{equation}\label{diam}
D_{I_0}\left( t \right) = \max \left\lbrace d_{ij}\left( t \right) \vert i,j \in I_0 \right\rbrace
\end{equation}
Equation \ref{diam} was applied to the data from a global set of four climate indices to determine the correlation among them (minimum in $D$) and to infer 18 changes in the state since 1970. (See section \ref{appendix}.) It was pointed out that the topological diameter $D$, as a measure of phase locking among the indices, is convenient for computation but was probably not the best measure. It was suggested that a better measure of correlation among the $N$ indices could be based upon the area of the spherical triangles created by the $N$ vectors on the unit sphere.

This paper gives a proof that $d_{ij}$ is a metric and generalizes the diameter to higher dimensions. In addition, the data of \cite{Douglass2010} are analyzed using this generalization to areas (see section \ref{appendix}) and many new abrupt climate changes are identified.

\section{Probability}
Let $X$ and $Y$ be random variables with expected values $E\left(X \right) = \mu$ and $E \left( Y \right) = \nu$. With these values we make several standard definitions.

\begin{defn}\label{Variance}
The Variance of $X$ is defined as $$Var \left[ X \right] = E \left[ \left( X - \mu \right)^2 \right]$$
\end{defn}

\begin{defn}\label{Covariance}
The Covariance of $X$ and $Y$ is defined as $$Covar\left[X,Y \right] = E \left[ \left(X- \mu \right) \left( Y - \nu \right) \right]$$
\end{defn}

We now list a few basic properties of variance and covariance (found in \cite{Ross}).

\begin{prps}\label{Basics}
For $X$ and $Y$ as above:
\begin{enumerate}
\item $Covar$ is symmetric.
\item $Covar$ is bilinear.
\item $Var \left[ - \right]$ is a quadratic form.
\item $Covar\left[X,Y \right] = E \left[ XY \right] - E \left[ X \right] E \left[ Y \right]$.
\item $Covar\left[ X, X \right] = Var \left[ X \right]$, the variance of $X$.

\end{enumerate}
\begin{proof}\
\begin{enumerate}
\item See \cite{Ross}, page 323.
\item Follows easily from the definition.
\item See \cite{Ross}, page 323.
\item See \cite{Ross}, page 323.
\end{enumerate}
\end{proof}
\end{prps}

\section{Vector Spaces}
The first way most students learn to compare two vectors is through the dot product. The dot product is one example of the more general idea of an inner product. Here we define an inner product and prove that covariance is an inner product.
\begin{defn}\label{Inner Product}
For any real vector space $V$ an inner product is a map $$ \langle - , - \rangle : V \times V \rightarrow \mathbb{R}$$ that satisfies the following properties for every $u,v,w \in V$ and $a \in \mathbb{R}$:

\begin{enumerate}
\item $\langle u+v, w \rangle = \langle u, w \rangle + \langle v, w \rangle$
\item $\langle av, w \rangle = a \langle v,w \rangle$
\item $\langle v,w \rangle = \langle w,v \rangle$
\item $\langle v, v \rangle \geq 0$ and \\ $\langle v,v \rangle = 0$ if and only if $ v = 0 $
\end{enumerate}
\end{defn}

We will now construct a vector space for which covariance is an inner product. Let $\left\lbrace X_1, X_2, \dots, X_n \right\rbrace$ be a set of $n$ random variables. Also let $V = Span_{\mathbb{R}}\left(X_1, X_2, \dots, X_n \right)$, the formal $\mathbb{R}$-vector space with basis elements $\left\lbrace X_1, X_2, \dots, X_n \right\rbrace$. We must put one mild hypothesis upon $V$ in order for it to have the desired properties. The hypothesis is that the vectors must be ``probabilistically independent". i.e. for any $c_1, \dots, c_n \in \mathbb{R}$, we have that $Var\left[ c_1X_1 + \cdots + c_n X_n \right] = 0$ if and only if $c_1 = \cdots = c_n = 0$. It should be noted that this independence is in no way related to the linear independence of the random variables.

\begin{prop}\label{IP}
Let $V = Span_\mathbb{R} \left( X_1, X_2, \dots, X_n\right)$, the formal $\mathbb{R}$-vector space generated by the random variables $\left\lbrace X_1, X_2, \dots, X_n \right\rbrace$ which are probabilistically independent, then covariance is an inner product on $V$.
\end{prop}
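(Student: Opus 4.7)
The plan is to verify the four axioms in Definition \ref{Inner Product} directly, using the properties already listed in Properties \ref{Basics} together with the probabilistic independence hypothesis imposed on the generating set $\{X_1,\ldots,X_n\}$.

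First, I would extend $Covar$ to arbitrary elements of $V$ by bilinearity: given $v=\sum c_i X_i$ and $w=\sum d_j X_j$, define $\langle v,w\rangle = Covar[v,w] = \sum_{i,j} c_i d_j\, Covar[X_i,X_j]$. Axioms (i) and (ii) (additivity and homogeneity in the first slot) then fall out immediately from the bilinearity of $Covar$ recorded in Properties \ref{Basics}(2), and axiom (iii) (symmetry) is exactly Properties \ref{Basics}(1). None of these steps uses the independence assumption.

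The only axiom that requires real work is (iv), the positive-definiteness condition. For $v=\sum c_i X_i \in V$, Properties \ref{Basics}(5) gives $\langle v,v\rangle = Covar[v,v]=Var[v]$, and from Definition \ref{Variance} we have $Var[v]=E[(v-E[v])^2]\ge 0$ since it is the expectation of a nonnegative random variable. So nonnegativity is immediate. For the equality clause, I would argue both directions: if $v=0$ in the formal vector space $V$, then all $c_i=0$, so $v$ is the zero random variable and $Var[v]=0$. Conversely, if $\langle v,v\rangle = Var[v]=0$, then the probabilistic independence hypothesis, which says precisely that $Var[c_1 X_1+\cdots+c_n X_n]=0$ forces $c_1=\cdots=c_n=0$, yields $v=0$ in $V$.

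There is no real obstacle here; the main conceptual point to emphasize is the distinction between $v=0$ in the formal vector space $V$ (meaning every coefficient $c_i$ vanishes) and $v$ being the zero random variable (meaning $v$ is almost surely $0$). The probabilistic independence hypothesis is exactly the bridge that identifies these two notions and thereby rules out degenerate linear combinations that would prevent $Covar$ from being positive-definite on $V$.
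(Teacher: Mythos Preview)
Your proposal is correct and follows essentially the same route as the paper: axioms (i)--(iii) are deduced from the bilinearity and symmetry recorded in Properties \ref{Basics}, and axiom (iv) combines the obvious nonnegativity of $Var[v]=E[(v-\mu)^2]$ with the probabilistic independence hypothesis for the definiteness clause. Your additional care in distinguishing $v=0$ in the formal vector space from $v$ being the zero random variable is a welcome clarification but does not change the underlying argument.
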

\begin{proof}
We must prove the four properties from definition \ref{Inner Product}.\\
i), ii) and iii) follow immediately from proposition \ref{Basics}.\\
iv) $Covar\left( X, X \right) = E \left[ \left( x- \mu \right)^2 \right] \geq 0$. The non-negativity is obvious as we are squaring a real number. The condition that $Covar\left( X, X \right) = 0 \Leftrightarrow X =0$ follows from the probablisitic independence of $\left\lbrace X_1, \dots, X_n \right\rbrace$. 
\end{proof}

The proposition implies that $V$ is an inner-product space (a vector space equipped with an inner-product), and as such it has a norm defined by $\Vert X \Vert = \sqrt{ Covar \left( X,X \right)} = SD \left( X \right)$, where $SD \left( X \right)$ is the standard deviation of $X$. Additionally it follows from the Cauchy-Schwartz inequality (\cite{HoffmanKunze}) that $\vert Covar \left( X,Y \right) \vert \leq SD \left( X \right) SD \left( Y \right)$. 

Using the inner product on $V$ we are able to define an angle between two vectors. To do this we first define a new map $\rho : \left( V \setminus \left\lbrace 0 \right\rbrace \right) \times \left( V \setminus \left\lbrace 0 \right\rbrace  \right) \rightarrow \mathbb{R} $ using the standard definition of correlation $$\rho\left(X,Y \right) = \dfrac{Covar \left( X,Y \right) }{SD \left( X \right) SD \left( Y \right)}$$ By the Cauchy-Schwartz inequality we can easily see that $ \left| \rho \left( X, Y \right) \right| \leq 1$, as such we implicitly define $\Gamma$, the angle between $X$ and $Y$, as follows: $$Covar\left( X,Y \right) = SD \left( X \right) SD \left( Y \right) \cos \left( \Gamma \right)$$ Therefore $\rho\left(X,Y \right) = \dfrac{Covar \left( X,Y \right) }{SD \left( X \right) SD \left( Y \right)} = \cos \left( \Gamma \right)$. 

\begin{defn}\label{corrangle}
$\Gamma \left(X,Y \right) = \cos^{-1} \left( \rho \left( X, Y \right) \right)$ is the ``Correlation Angle" of $X$ and $Y$.
\end{defn}
 
Our definition of $\Gamma$ is the standard method of defining an angle from the covariance (or any other) inner product. We will show that $\Gamma$ is a `metric' on the unit sphere of $V$.

\begin{defn}\label{metric}
For any set $S$ a map $d: S \times S \rightarrow \mathbb{R}$ is a metric if for any $x,y,z \in S$ the following properties are satisfied.
\begin{subequations}
\begin{align}
d \left( x,y \right) \geq 0 \text{ with }d \left( x,y \right) = 0 \Leftrightarrow x=y \text{ (positive definite)}\\
d \left( x,y \right) = d \left(y,x \right) \text{ (symmetry)}\\
d \left( x,z \right) \leq d \left( x,y\right) + d\left( y,z \right) \text{ (triangle inequality)}
\end{align}
\end{subequations}
\end{defn}

\begin{thm}\label{GammaMetric}
The map $\Gamma: V \times V \rightarrow \mathbb{R}$ from definition \ref{corrangle} is a metric on $S \left(V \right) = $ the unit sphere of $V$.
\end{thm}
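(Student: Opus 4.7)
The plan is to verify the three metric axioms from Definition \ref{metric}, using the fact that on $S(V)$ every vector has norm $1$, so that $\rho(X,Y) = Covar(X,Y) = \langle X, Y\rangle$ when $X,Y \in S(V)$. Thus $\Gamma(X,Y) = \cos^{-1}\langle X, Y \rangle$ takes values in $[0,\pi]$, which immediately gives nonnegativity.

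For positive definiteness, I would observe that $\Gamma(X,Y) = 0$ iff $\langle X, Y\rangle = 1 = \|X\|\|Y\|$, which by the equality case of Cauchy--Schwarz (valid because covariance is a genuine inner product by Proposition \ref{IP}) forces $Y = \lambda X$ for some $\lambda \geq 0$; comparing norms gives $\lambda = 1$ and hence $X = Y$. Symmetry is immediate from symmetry of $Covar$.

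The substantive step is the triangle inequality $\Gamma(X,Z) \leq \Gamma(X,Y) + \Gamma(Y,Z)$, i.e.\ the spherical triangle inequality on $S(V)$. Write $\alpha = \Gamma(X,Y)$, $\beta = \Gamma(Y,Z)$, $\gamma = \Gamma(X,Z)$, each in $[0,\pi]$. If $\alpha + \beta \geq \pi$ the claim is trivial since $\gamma \leq \pi$. Otherwise, since $\cos$ is strictly decreasing on $[0,\pi]$, the inequality $\gamma \leq \alpha + \beta$ is equivalent to
\[
\cos\gamma \;\geq\; \cos(\alpha+\beta) \;=\; \cos\alpha\cos\beta - \sin\alpha\sin\beta.
\]
To establish this, I would decompose $X$ and $Z$ along $Y$ and its orthogonal complement inside $V$: write $X = (\cos\alpha)Y + X^{\perp}$ and $Z = (\cos\beta)Y + Z^{\perp}$ with $X^\perp, Z^\perp \perp Y$. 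Since $\|X\| = \|Z\| = 1$, the Pythagorean theorem gives $\|X^{\perp}\| = \sin\alpha$ and $\|Z^{\perp}\| = \sin\beta$ (nonnegative because $\alpha,\beta \in [0,\pi]$). Expanding the inner product yields
\[
\cos\gamma \;=\; \langle X, Z \rangle \;=\; \cos\alpha\cos\beta + \langle X^{\perp}, Z^{\perp}\rangle,
\]
and Cauchy--Schwarz applied to $X^\perp, Z^\perp$ bounds $\langle X^\perp, Z^\perp\rangle \geq -\sin\alpha\sin\beta$, completing the inequality.

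The main obstacle is the triangle inequality; the rest is formal. The key maneuver is replacing the unfamiliar $\cos^{-1}$ inequality by an inner-product inequality via the angle-addition formula, then recognizing that the required bound is precisely a Cauchy--Schwarz estimate on the orthogonal complement of $Y$, which is legitimate because Proposition \ref{IP} guarantees $V$ is a genuine inner-product space.
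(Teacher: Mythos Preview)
Your proof is correct. Positive definiteness and symmetry are handled essentially as in the paper (you invoke the equality case of Cauchy--Schwarz where the paper simply says ``angle zero implies positive scalar multiple,'' but these are the same fact). The real divergence is in part (c).

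For the triangle inequality the paper takes a geometric route: it identifies $(V,\mathrm{Covar})$ with $(\mathbb{R}^n,\cdot_A)$ via the Gram matrix $A$, observes that $\Gamma$ is then the geodesic (great-circle) distance on the unit sphere, and appeals to the fact that geodesic distance in a Riemannian manifold satisfies the triangle inequality, citing standard spherical-geometry references. Your argument is instead a fully self-contained inner-product-space computation: you decompose $X$ and $Z$ along $Y$ and its orthogonal complement, reduce the inequality $\gamma\le\alpha+\beta$ to $\cos\gamma\ge\cos\alpha\cos\beta-\sin\alpha\sin\beta$ via monotonicity of $\cos$ on $[0,\pi]$, and then obtain this from Cauchy--Schwarz applied to $X^{\perp},Z^{\perp}$. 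What the paper's route buys is a conceptual explanation of \emph{why} $\Gamma$ behaves like a distance (it literally is arc length on a sphere), which also motivates the later area and volume measures $M_n$, $M_{n,a}$. What your route buys is that it requires no Riemannian or spherical-geometry machinery at all---only Proposition~\ref{IP} and Cauchy--Schwarz---so it is more elementary and leaves no step to external references.
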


\begin{proof}
We must prove that $\Gamma$ satisfies the 3 conditions in definition \ref{metric}.
\begin{enumerate}
\item[(a)] $\cos^{-1}: \left[-1,1\right] \rightarrow \left[0, \pi \right]$ so the non-negativity is satisfied trivially. It remains to show that $\Gamma \left( X,Y \right) = 0 \Leftrightarrow X=Y$. This true because if the angle between two vectors is zero, then they are (positive) scalar multiples of each other. Thus since $X$ and $Y$ are unit vectors, if $\Gamma \left(X, Y \right) = 0 $ we must have $X = Y$. 
\item[(b)] $\Gamma \left( X,Y \right) = \cos^{-1} \left( \rho \left( X,Y \right) \right) = \cos^{-1} \left( \dfrac{Covar \left( X,Y \right)}{SD \left( X \right) SD \left( Y \right) } \right)= $ \\ $\cos^{-1} \left( \dfrac{Covar \left( Y, X \right)}{ SD \left( Y \right) SD \left( X \right) }\right) = \cos^{-1} \left( \rho \left(Y,X \right) \right) = \Gamma \left(Y,X \right)$
\item[(c)] To prove the triangle inequality, a geometric idea in itself, we delve into the geometry being defined. We will complete this part of the proof in section \ref{Geometry}.
\end{enumerate}
\end{proof}

Our metric $\Gamma$ allows us to measure the correlation between two vectors.

\begin{defn}\label{Corr}
For $X$, $Y$, $\Gamma$ and $\rho$ as above:

\begin{enumerate}
\item If $\Gamma = 0$ $\left(\rho =1\right) $ then $X$ and $Y$ are maximally positively-correlated.
\item If $\Gamma = \pi $ $\left(\rho =-1\right) $ then they are maximally negatively-correlated.
\item If $\Gamma = \pi/2$ $\left(\rho =0\right) $ then $X$ and $Y$ are uncorrelated.
\end{enumerate}
\end{defn}

It should be noted that cases (i) and (ii) are both considered to be ``maximally correlated".  

\section{A Geometric Interpretation}\label{Geometry}
The vector space $V$ with inner product $Covar$ lends itself nicely to a geometric interpretation. First we must establish a small amount of background.

Consider $S$, the standard unit sphere in Euclidean $n$-space ($\mathbb{R}^n$). Great circles are the intersection of a plane through the origin and $S$. They share many properties with the standard idea of lines in Euclidean space, including the property that they define the shortest path between any two points. For a thorough treatment of great circles as lines on a sphere see \cite{Henderson}, \cite{HendersonDiff}, \cite{BON} or \cite{Morita}.

For any two non-zero vectors $v_1$ and $v_2$ in $\mathbb{R}^n$ let $\theta$ be the (minimal) angle formed by $v_1$ and $v_2$. The unit vectors $\hat{v}_1$ and $\hat{v}_2$, corresponding to $v_1$ and $v_2$, define two points $p_1$ and $p_2$ on $S$. In order to measure the distance from $p_1$ to $p_2$ along $S$ we take the length of the arc on great circle between the two points. By definition this is the radian measure of $\theta$.

If $V$, the vector space considered in section 3, is thought of as $\mathbb{R}^n$ with $v_1$ and $v_2$ any two vectors, then we can compute the spherical distance between $v_1$ and $v_2$, namely the distance between $p_1$ and $p_2$ on $S$. We call this quantity $\Gamma$: $$d_{\text{spherical}} \left( v_1, v_2 \right) = arccos \left( \rho \left( \hat{v}_1, \hat{v}_2 \right) \right) = \Gamma$$

Thus far we have identified the inner product space $ \left( V, Covar \right)$ as $\mathbb{R}^n$. We solidify this intuition with the following proposition. First we define $A= \left( A_{i,j } \right) = \left(Covar \left( X_i, X_j \right)\right)$, a real valued symmetric matrix. As in \cite{HoffmanKunze} we use $A$ to create the inner product on $\mathbb{R}^n$.

\begin{prop}\label{twisted dot product}
The inner product space $\left\langle Span_\mathbb{R} \left( X_1, \dots , X_n \right), Covar \right\rangle \cong \left\langle \mathbb{R}^n, \cdot_A \right\rangle$ w here $\cdot_A$ is a `twisted dot product' defined for two vectors $\left( c_1, \dots, c_n\right)$ and $\left( d_1, \dots , d_n \right)$ as  $$ \left( c_1, \dots, c_n \right) \cdot_A \left( d_1, \dots, d_n \right):= \left(c_1, \dots, c_n \right) A \left( \begin{array}{c}
d_1 \\ 
\vdots \\
d_n 
\end{array} \right)$$
\end{prop}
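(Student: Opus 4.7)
The plan is to exhibit an explicit isomorphism and then use the bilinearity of covariance to verify that it intertwines the two inner products. Define the map
\[
\phi : \mathbb{R}^n \longrightarrow Span_\mathbb{R}(X_1,\dots,X_n), \qquad \phi(c_1,\dots,c_n) = c_1 X_1 + \cdots + c_n X_n.
\]
By construction $V$ is the formal $\mathbb{R}$-vector space with basis $\{X_1,\dots,X_n\}$, so $\phi$ is a linear bijection. The content of the proposition is therefore the single identity
\[
Covar\bigl(\phi(c),\phi(d)\bigr) = c \cdot_A d \qquad \text{for all } c,d \in \mathbb{R}^n.
\]

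First, I would expand the left-hand side using the bilinearity of $Covar$ from Properties \ref{Basics}:
\[
Covar\left(\sum_{i=1}^n c_i X_i,\ \sum_{j=1}^n d_j X_j\right) = \sum_{i,j=1}^n c_i d_j\, Covar(X_i, X_j) = \sum_{i,j=1}^n c_i A_{ij} d_j.
\]
The right-hand side, by definition of the twisted dot product, is the matrix expression $(c_1,\dots,c_n)\, A\, (d_1,\dots,d_n)^T$, which is the same double sum. So the identity holds term by term, and $\phi$ is an isometry of the claimed form.

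The only thing left to check — and really the only subtle point — is that $\cdot_A$ actually is an inner product on $\mathbb{R}^n$, so that the statement of an isomorphism of \emph{inner-product} spaces is meaningful. Symmetry is immediate from the symmetry of $Covar$ (hence of $A$), and bilinearity is immediate from the matrix formula. Positive definiteness is where the probabilistic independence assumption enters: for $c \neq 0$,
\[
c \cdot_A c = Covar\bigl(\phi(c),\phi(c)\bigr) = Var\left[\sum_i c_i X_i\right] > 0
\]
by the hypothesis on $\{X_1,\dots,X_n\}$. I expect this positivity step to be the only nontrivial part of the argument; everything else is bookkeeping with bilinearity. Once it is in place, $\phi$ is an inner-product-preserving linear bijection, which is exactly the required isomorphism.
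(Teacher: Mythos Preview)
Your proof is correct and is exactly the ``standard method of representing an inner product by a matrix'' that the paper invokes by reference to Hoffman--Kunze; you have simply written out explicitly the coordinate isomorphism and the bilinearity computation that the paper leaves implicit. In particular, your check that $\cdot_A$ is positive definite via the probabilistic independence hypothesis is the one substantive point, and it matches what is needed.
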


\begin{proof}This follows from the standard method of representing an inner-product by a matrix. (See \cite{HoffmanKunze} chapter 8.1).
\end{proof}

Now we return to our proof of \ref{GammaMetric}.
\begin{proof}[Proof of \ref{GammaMetric}(iii)]
Let $X,Y,Z \in V$ be unit vectors. We have left to show that $\Gamma\left( X,Y \right) + \Gamma \left(Y,Z \right) \geq \Gamma \left( X,Z \right)$. \\
Because $X$ and $Z$ are unit vectors, $\Gamma\left( X,Z\right)$ is the geodesic distance between $X$ and $Z$. Since geodesic distance satisfies the triangle inequality, $\Gamma$ must as well.
\end{proof}

\section{Projective Metric}

For scientists, $\rho = \pm 1$ (equivalently $\Gamma = 0 $ or $\Gamma = \pi$) are often both considered to be ``maximally correlated", for example see \cite{Douglass2010}. To take this into account we modify our metric on the unit sphere of $V$. We think of $V$ as a projective space, the space of lines through the origin of $V$. We denote this space as $\mathbb{P}\left( V \right)$.

Our original correlation angle $\Gamma$ is modified to be: 
$$\Gamma^{\prime} = \arccos \vert \rho \left( X, Y \right) \vert = 
	\begin{cases}
		\Gamma & : 0 \leq \Gamma \leq \pi/2\\
		\pi - \Gamma & : \pi/2 \leq \Gamma \leq \pi	
	\end{cases} $$
	
\begin{prop}\label{projective metric}
$\Gamma^\prime \left( X,Y \right)$ is a metric on $\mathbb{P}\left(\mathbb{R}^n\right)$.
\end{prop}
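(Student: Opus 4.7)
The plan is to reduce the claim to Theorem \ref{GammaMetric} by exploiting that $\Gamma'$ is invariant under replacing any argument by its negative, so it descends to $\mathbb{P}(V)$ and one has the freedom to choose $\pm$ representatives for each class.

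First I would dispatch positive definiteness and symmetry directly. Symmetry is immediate from symmetry of $\rho$. For positive definiteness, $\Gamma'(X,Y)\in[0,\pi/2]$ since $|\rho|\le 1$, and $\Gamma'(X,Y)=0$ iff $|\rho(X,Y)|=1$ iff $\rho(X,Y)=\pm 1$ iff $Y=\pm X$ on the unit sphere, i.e.\ $[X]=[Y]$ in $\mathbb{P}(V)$. So (a) and (b) of Definition \ref{metric} hold on the projective space.

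The triangle inequality is the main point. Given classes $[X],[Y],[Z]\in\mathbb{P}(V)$, I would pick unit vector representatives $X,Y,Z\in S(V)$ with the freedom to replace any of them by its negative. Observe the key identity
\[
\Gamma'(U,W)=\min\bigl\{\Gamma(U,W),\,\pi-\Gamma(U,W)\bigr\}=\min_{\epsilon\in\{\pm 1\}}\Gamma(U,\epsilon W),
\]
which follows from $\Gamma(U,-W)=\pi-\Gamma(U,W)$ and the piecewise formula given for $\Gamma'$. In particular, $\Gamma'(U,W)\le\Gamma(U,W)$ always, and equality holds whenever $\Gamma(U,W)\le\pi/2$. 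Now I would normalize the representatives in two steps: after fixing $X$, replace $Y$ by $-Y$ if necessary so that $\Gamma(X,Y)\le\pi/2$; then replace $Z$ by $-Z$ if necessary so that $\Gamma(Y,Z)\le\pi/2$. With these choices,
\[
\Gamma'(X,Y)=\Gamma(X,Y),\qquad \Gamma'(Y,Z)=\Gamma(Y,Z).
\]

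Finally I apply Theorem \ref{GammaMetric}(c) to the spherical representatives and use $\Gamma'\le\Gamma$ on the left:
\[
\Gamma'(X,Z)\;\le\;\Gamma(X,Z)\;\le\;\Gamma(X,Y)+\Gamma(Y,Z)\;=\;\Gamma'(X,Y)+\Gamma'(Y,Z).
\]
Since $\Gamma'$ is independent of the sign choices, this bound holds for the original classes, establishing the triangle inequality. The only potential subtlety is verifying that the two sign normalizations can be made independently — but once $X$ is fixed, the choice of sign for $Y$ is free, and once $Y$ is fixed the choice of sign for $Z$ is free, so there is no conflict. This is the whole content of the proof; I expect no substantive obstacle beyond this bookkeeping.
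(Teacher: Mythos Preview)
Your argument is correct. For positive definiteness and symmetry you do exactly what the paper does. For the triangle inequality, however, the paper takes a different, more conceptual route: it simply asserts that $\Gamma'$ is the geodesic distance on the projective space $\mathbb{P}(V)$ (i.e.\ the quotient of the round sphere by the antipodal map), and invokes the fact that geodesic distance automatically satisfies the triangle inequality. Your approach is more elementary and entirely self-contained: you use the identity $\Gamma'(U,W)=\min_{\epsilon=\pm 1}\Gamma(U,\epsilon W)$ to choose signs so that two of the three pairwise $\Gamma'$-distances equal the corresponding $\Gamma$-distances, and then apply the spherical triangle inequality from Theorem~\ref{GammaMetric}. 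What you have effectively done is give an explicit proof, in this special case, of the general fact the paper is quoting --- that the quotient metric on $S(V)/\{\pm 1\}$ inherits the triangle inequality. The paper's version is shorter but relies on outside Riemannian-geometric input; yours requires no such appeal and would be preferable in a context where that input is not already available.
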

\begin{proof}
We must show that the three conditions of \ref{metric} are met.
\begin{enumerate}
\item $\Gamma^\prime = 0$ corresponds to a correlation angle of $0$ or $\pi$. The two vectors are either in the same direction or opposite direction. In either case they determine the same line through the origin and hence correspond to the same point in projective space.
\item As in \ref{metric} the symmetry of $\Gamma^\prime$ follows from the symmetry of $\rho$.
\item As before, the triangle inequality follows as $\Gamma^\prime$ is the geodesic distance for a projective space.
\end{enumerate}

\end{proof}

The metric $\Gamma^\prime \left( X,Y \right)$ gives the angular distance between $X$ and $Y$. If $\rho \left( X,Y \right) = \pm 1$ (what we called a ``maximal correlation") then $\Gamma^\prime = 0$ however if $\rho \left( X,Y \right) = 0$, which we called orthogonality or non-correlation, then $\Gamma^\prime = \dfrac{\pi}{2}$.  
\begin{prop}\label{projective space}
Let $\Gamma^\prime$ be the metric $\cos^{-1} \left( \rho^\prime \left( X,Y \right)\right) $, then the pair $\left(\mathbb{P}\left(V\right), \Gamma^\prime \right)$ is a projective metric space.
\end{prop}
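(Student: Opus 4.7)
The plan is to read off most of the proposition from the previous work and then add the one geometric observation that makes $(\mathbb{P}(V), \Gamma')$ \emph{projective} rather than merely a metric space on a set that happens to equal $\mathbb{P}(V)$. By Proposition \ref{projective metric}, $\Gamma'$ already satisfies the three metric axioms on $\mathbb{P}(V)$, so the remaining content is to exhibit the projective structure as genuinely compatible with the metric: in particular, that projective lines in $\mathbb{P}(V)$ are realized as geodesics of $\Gamma'$.

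First I would use Proposition \ref{twisted dot product} to identify $(V, Covar)$ isometrically with $(\mathbb{R}^n, \cdot_A)$, then perform a further orthonormal change of basis diagonalizing $A$ (possible because $A$ is positive definite by probabilistic independence), so that the inner product becomes the standard dot product. This reduces the question to the standard unit sphere $S^{n-1} \subset \mathbb{R}^n$ with its spherical metric $\Gamma$, and the real projective space $\mathbb{P}(V) \cong S^{n-1}/\{\pm 1\}$ with $\Gamma'$ induced by the antipodal quotient. The formula
\[
\Gamma'([X],[Y]) = \min\{\Gamma(X,Y),\ \pi - \Gamma(X,Y)\} = \min\{\Gamma(X,Y),\ \Gamma(X,-Y)\}
\]
is then exactly the quotient-space metric coming from identifying antipodal points on $S^{n-1}$.

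Second, I would verify that projective lines (the images of great circles under the antipodal quotient) are geodesics for $\Gamma'$. The quotient map $\pi : S^{n-1} \to \mathbb{P}(V)$ is a local isometry whose restriction to any closed half great circle of length $\pi/2$ is injective. Since $\Gamma'$ only takes values in $[0, \pi/2]$, every pair of points in $\mathbb{P}(V)$ admits representatives on $S^{n-1}$ that lie on a common half great circle and realize the minimal distance. Lifting to the sphere and invoking the fact that great circles are the geodesics on $S^{n-1}$ (as in \cite{Henderson, HendersonDiff, BON, Morita}) then transports the geodesic property down to $\mathbb{P}(V)$.

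The main obstacle is terminological rather than technical: pinning down what ``projective metric space'' is meant to require here. Once we agree it denotes a metric space whose underlying set is a projective space and whose projective lines are geodesics of the metric, the argument reduces to transporting the already-established spherical geodesic property across the antipodal quotient, which is precisely where the range restriction $\Gamma' \leq \pi/2$ is used.
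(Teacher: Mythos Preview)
Your proposal is correct, but it does far more work than the paper's own proof, which consists of the single sentence ``This is by construction.'' The paper treats the proposition as essentially definitional: $\mathbb{P}(V)$ was built as the space of lines through the origin and $\Gamma'$ was defined precisely as the quotient metric coming from identifying antipodal points on the unit sphere, so the pair $(\mathbb{P}(V),\Gamma')$ is the standard model of real projective space with its round metric, and there is nothing further to verify.

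Your approach, by contrast, unpacks what ``projective metric space'' ought to mean (a projective space whose projective lines are geodesics of the metric) and then actually proves it: you use Proposition~\ref{twisted dot product} to reduce to the standard sphere, recognize $\Gamma'$ as the quotient metric under the antipodal map, and transport the geodesic property of great circles down through the local isometry $\pi:S^{n-1}\to\mathbb{P}(V)$. This is a genuinely more informative argument---it makes explicit the geometric content that the paper leaves implicit in the word ``construction''---and your observation that the range restriction $\Gamma'\le\pi/2$ is exactly what allows lifting to a single half great circle is the right technical point. You also correctly identify that the main difficulty is terminological; the paper never defines ``projective metric space,'' which is why it can get away with a one-line proof.
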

\begin{proof}
This is by construction.
\end{proof}
\section{Time Dependence}\label{TD}
Until this point we have treated our random variables $\left\lbrace X_1, \dots, X_n \right\rbrace$ as being time-independent. However, random variables often depend on time. Therefore we will now consider each random variable as depending discretely on time. It should be noted that what follows is essentially a replication of what has come before, however $X$ and $Y$ are now treated as vectors instead of singleton points. Vectors, however, are just points of $V$. The additional theory and notation is simply a means of dealing with the additional information.

To make our $n$ random variables time dependent they will now be given as:

$$\begin{array}{c} X_1 = \left\lbrace X_1 \left( t \right) , X_1 \left( t+1 \right) , X_1 \left( t + 2 \right) , \dots \right\rbrace \\
X_2 = \left\lbrace X_2 \left( t \right) , X_2 \left( t+1 \right) , X_2 \left( t + 2 \right) , \dots \right\rbrace \\
\vdots \\
X_n = \left\lbrace X_n \left( t \right), X_n \left( t+1 \right), X_n \left( t + 2 \right), \dots \right\rbrace
\end{array} $$

We must now redefine the covariance. We do this by looking at a time window starting at time $t$ with a duration of $K$, where $K$ is called the summation window. 
$$Covar\left( X_i, X_j \right) = \frac{1}{K} \displaystyle\sum_{l=t }^{t+K-1} \left( X_i\left( l \right) - \mu \right) \left( X_j \left( l \right)  - \nu \right)$$ 
Where $\mu$ and $\nu$ are the sample means in the summation window of $X_i$ and $X_j$ respectively. i.e. $\mu = \dfrac{1}{K} \displaystyle\sum_{l=t}^{p=t+K-1}X_i \left( l \right)$. 

If we think of $\hat{X_i}$ and $\hat{X_j}$ as the vectors $\hat{X_i} = \left( X_i\left( t \right), \dots, X_i \left( t+K-1 \right) \right)$ (resp. for $\hat{X_j}$)  then we get that $$Covar \left(X_i, X_j \right) = \frac{1}{K} \left( \hat{X_i} - \hat{\mu} \right) \cdot \left( \hat{X_j} - \hat{\nu} \right)$$ where``$\cdot$" is the standard Euclidean dot product and $\hat{\mu}$ is the length $K$ vector $\left( \mu, \dots, \mu \right)$ (resp. for $\hat{\nu}$). This is called the ``Pearson Covariance".

In other words if we define the vectors $\hat{\hat{X_i}} = \dfrac{\hat{X_i}\left( t \right) - \mu}{\sqrt{K}}$ and $\hat{\hat{X_j}} = \dfrac{\hat{X_j}\left( t \right) - \mu}{\sqrt{K}}$ then we define the Pearson Correlation as follows.

\begin{defn}\label{Pearson}
$Covar_{Pearson}\left( X_i,X_j \right) = \hat{\hat{X_i}} \cdot \hat{\hat{X_j}}$, where ``$\cdot$" is the usual Euclidean inner product. 
\end{defn}

Now we define the Pearson Correlation  as  $$\hat{\rho} \left( X_i, X_j \right) = \dfrac{ Covar_{Pearson} \left( X_i,X_j \right) }{\sqrt{Covar_{Pearson} \left( X_i,X_i\right)Covar_{Pearson} \left( X_j,X_j \right) }} = \cos \left( \hat{\Gamma} \right)$$ Here again $\hat{\Gamma}$ corresponds to the standard Euclidean angle, known as the Pearson Correlation Angle, and the resulting metric is the standard metric studied in classical spherical geometry. (See \cite{Henderson}, \cite{HendersonDiff}, \cite{BON} or \cite{Morita}). 

\begin{rmk}
The angle $\hat{\Gamma}$ between $\hat{\hat{X_i}}$ and $\hat{\hat{X_j}}$ is the same as the angle between $\bar{X_i}$ and $\bar{X_j}$, the unit vectors corresponding to $\hat{\hat{X_i}}$ and $\hat{\hat{X_j}}$.
\end{rmk}
\section{Correlation Measures: $M_n$ and $M_{n,a}$}\label{CorrMeasures}

To this point we have developed a method that will numerically tell us the correlation between two vectors. In this section we will create two sets of functions that allow us to measure the correlation across a set of vectors. The first set, $\left\lbrace M_{i,a} \right\rbrace$ is based upon taking the volumes of $i$-simplices (a 1-simplex is a line, a 2-simplex a triangle, a 3-simplex is a tetrahedron, etc.) The set of $M_{i,a}$ benefits from computability, but is not as precise as the second set of measures $\left\lbrace M_i \right\rbrace$, that measure the volume of $i$-dimensional convex hulls. 

Given a set of vectors $\left\lbrace X_1, \dots, X_m \right\rbrace \subseteq V$, let $U= \left\lbrace U_1, \dots, U_m \right\rbrace$ be the set of corresponding unit vectors. We will define a way to measure the closeness of the $U_i$ to each other using the metric $\Gamma$. To do this we define the diameter of $U$ as $$D = \displaystyle\max_{i,j} \left\lbrace \Gamma \left(U_i, U_j \right) \right\rbrace$$ If all of the vectors are taken in the standard way to be points on the unit sphere, then the diameter is a measure of the overall spread of the points. If the diameter is small then the vectors are all close together, hence highly correlated. Whereas if the diameter is large at least some of the points are far apart, hence not highly correlated. The benefit of the diameter is that it is an easy quantity to calculate, however it can be somewhat misleading. If, for instance, a large number of points are clustered together and there is one outlying point the diameter can be quite large despite the fact that the points are generally quite correlated.

We now proceed to generalize the correlation measure defined by $D$. Let $T$ be a collection of $t$ points on the $n$-sphere, and let $D$ be the set of $n$-simplices made up of points in $T$.

\begin{defn}\label{Mna}
$M_{n,a} \left( T \right) = \displaystyle\max_{\Delta \in D} \left\lbrace \text{Vol} \left( \Delta \right) \right\rbrace$
\end{defn}

This maximum is taken over the $C = C\left( T,n \right) = {t \choose n}$ different simplices made of points in $T$. 

\begin{defn}\label{Mn}
$M_{n} \left( T \right) = \text{Vol}\left(H\right)$. 
\end{defn}

The volume used in \ref{Mn} is the spherical volume and $H$ is the convex hull of the points of $T$ with respect to the spherical measure. That is, it is the smallest geodesically convex set containing $T$. (Geodesically convex means that any two points in the set have the minimal geodesic between them completely in the set as well.)

The volume is computed by constructing the convex hull of $T$, then disregarding all the points of $T$ not contributing to the hull. The hull is then divided into its `essential' $n$-simplices and the volumes of these simplices are summed.

$M_n$ and $M_{n,a}$ are each measures of $n$-dimensional volume. $M_{n,a}$ benefits from being easily computable. $M_n$, though harder to compute, gives a better measure of the overall spread of the vectors. However, in the one dimensional case we have that $M_1 = M_{1,a} = D$, the diameter. The reason for this is that when making the hull to compute $M_1$ all but the two furthermost points will be disregarded. This equality is not true in general, a fact which can be easily observed by plotting four points forming a quadrilateral where $M_{2,a} < M_2$. In the general case however we do have the inequality $M_{n,a} \left( T \right) \leq M_n \left( T \right)$. This follows since the maximal simplex will necessarily be a subset of the convex hull. Since volume is monotonic we have the inequality.

Assume that $s$ of the $t$ points of $T$ are essential to the convex hull. There is a constant $B = B\left( s,n \right)$ defining the number of essential simplices that compose the convex hull. i.e. $$M_n \left( T \right) = \text{ the sum of the volume of }B \text{-essential simplices}$$ Replacing the volume of each spherical simplex with the maximal one, that is $M_{n,a} \left( T \right)$, we get the following inequalities $$M_{n,a} \left( T \right) \leq M_n\left( T \right) \leq B \cdot  M_{n,a} \left( T \right)$$
Since $B$ depends only on the number of points in $T$ we see that, for a fixed data set, $M_n$ and $M_{n,a}$ differ by at most a fixed constant. 

To relate $M_n$ and $M_{n,a}$ to section \ref{TD} we note that when $T$ time dependent random variables are looked at over a summation window of length $K=n+1$ then we get $T$ points on the $n$-sphere. In this situation we can apply the measures of spread given by $M_n \left(T\right)$ or $M_{n,a}\left( T \right)$ or $M_{k,a}\left( T \right)$ for $k < n$.

\section{Topology of earth's climate indices and phase-locked states}\label{appendix}
In this section we apply our new correlation measure to data from Douglass's paper \cite{Douglass2010}. In \cite{Douglass2010} the diameter ($M_{1}=M_{1,a}$) is used to analyze a set of climate data, in this section we use $M_{2,a}$ to analyze the same data. Comparing the results of the new analysis to Douglass's original analysis shows the increased effectiveness of the new correlation measure.

Various regions of the Earth's climate system are characterized by temperature and
pressure indices. Douglass \cite{Douglass2010}, in a study of a global set of four indices, defines a distance
\begin{equation}
\hat{\Gamma}_{ij}\left( t \right) = \cos^{-1}\left( \vert \hat{\rho}\left[X_i \left( t \right), X_j \left( t \right) \right] \vert \right)
\end{equation}
between indices that satisfies the properties required to be a metric (\ref{metric}) where $\rho\left( X_i \left( t \right), X_j \left( t \right) \right)$ is the Pearson correlation coefficient. Note that the distance $\Gamma$ is an angle. 

In (\ref{CorrMeasures}) the correlation among a set of indices can be measured, using $M_{i,a}$ by taking the volumes of $i$-simplices. In \cite{Douglass2010}, Douglass uses the diameter of the metric space $\left(I_0, \Gamma \right)$, defined as 

\begin{equation}
D_{I_0} \left( t \right) = \displaystyle\max_{i,j} \left\lbrace \Gamma \left[X_i \left( t \right), X_j \left( t \right) \right] \vert i,j \in I_0 \right\rbrace 
\end{equation}

In the notation of (\ref{CorrMeasures}) $D = M_{1,a}$. Geometrically, $D$ selects the largest angle $\Gamma \left( X_i, X_j \right)$ among the set. The diameter $D$ may be considered a ``dissimilarity" index because large $D$ means weak correlation. Thus, the minima in $D$ are associated with high correlation among the elements of the set. In Douglass, \cite{Douglass2010}, two cases were considered: (1) the set of 3 Pacific ocean indices; (2) the global set of 4 indices (6 independent pairs). The $D$ of the global set is shown (in red) in Figure 1.

The maximal area $M_{2,a}$, the generalized correlation measure, was computed for the same four indices of \cite{Douglass2010}. The plot for the calculation is shown (blue) in Figure 1ab. Comparison of the two plots shows that the area measure reveals more minima (30) than the diameter (18). The various minima are indicated by arrows in Figure 1ab, and a list of dates is given in Table 1.
\newpage

Table 1: Date of various minima in plots of diameter $D = M_{1,a}$ or area $A= M_{2,a}$. (Minima are identified with a change in the phase-locked state of the Earth's climate system).
\begin{center}
\includegraphics[width=130mm]{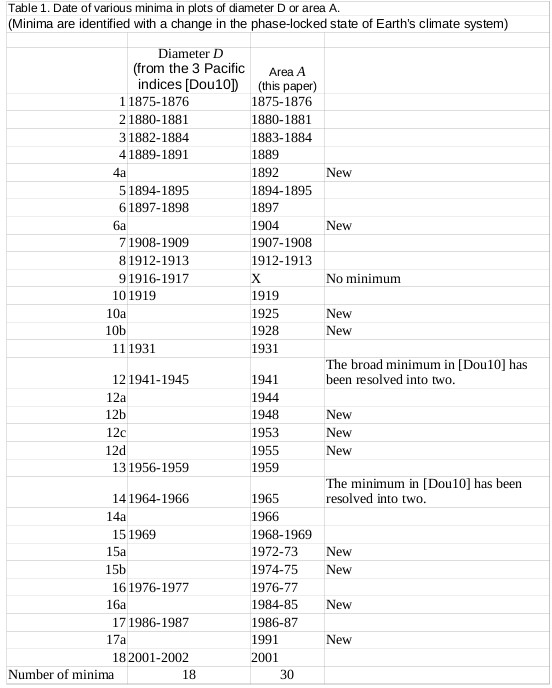}
\end{center}
\newpage
Figure 1(a) 1870-1940. (b)1940-2010
The plots are for two different correlation measures among a set of four global climate indices-- the diameter $D= M_{1,a}$ (in red) and the area $A=M_{2,a}$ (in blue) are defined in the text. Minima correspond to high correlation. The diameter $D$ plot shows 18 identified minima while the area $A$ plot shows 30. Comparisons are given in table 1.
\begin{center}
\includegraphics[width=130mm, height = 80mm]{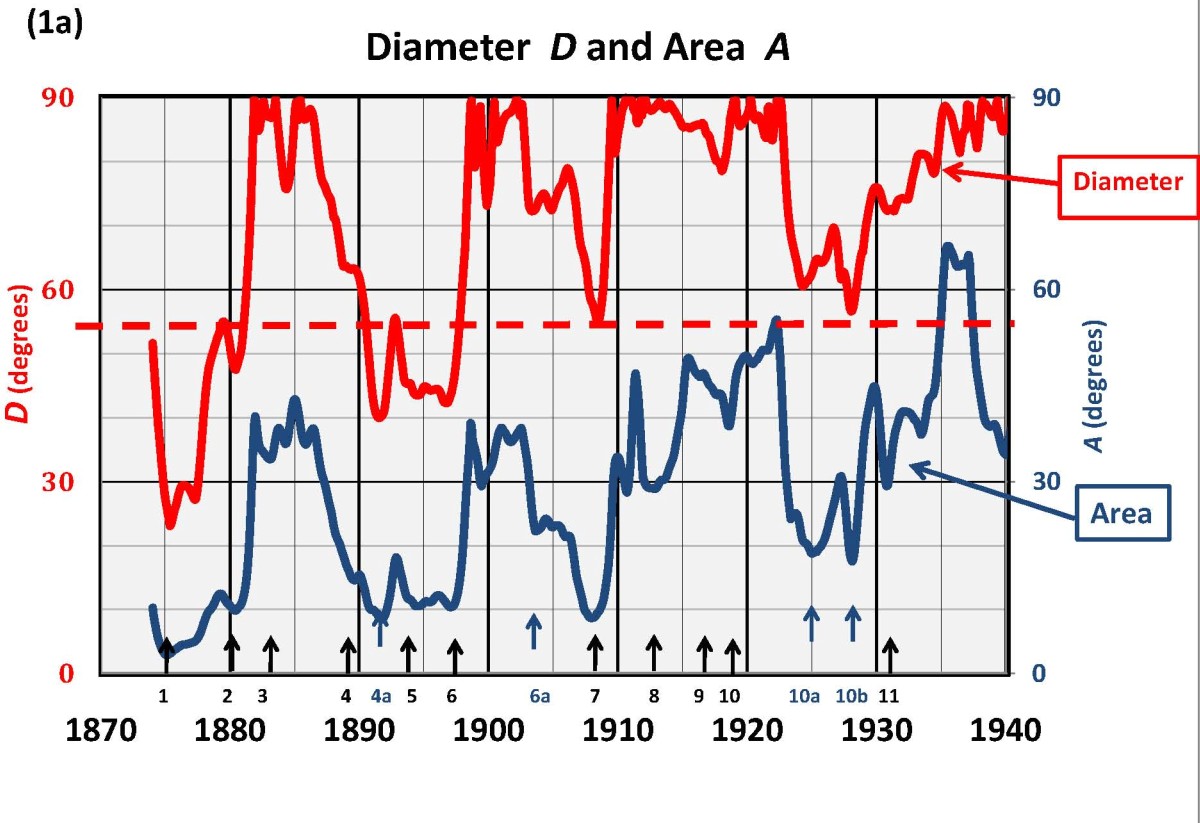} 
\includegraphics[width=130mm, height = 80mm]{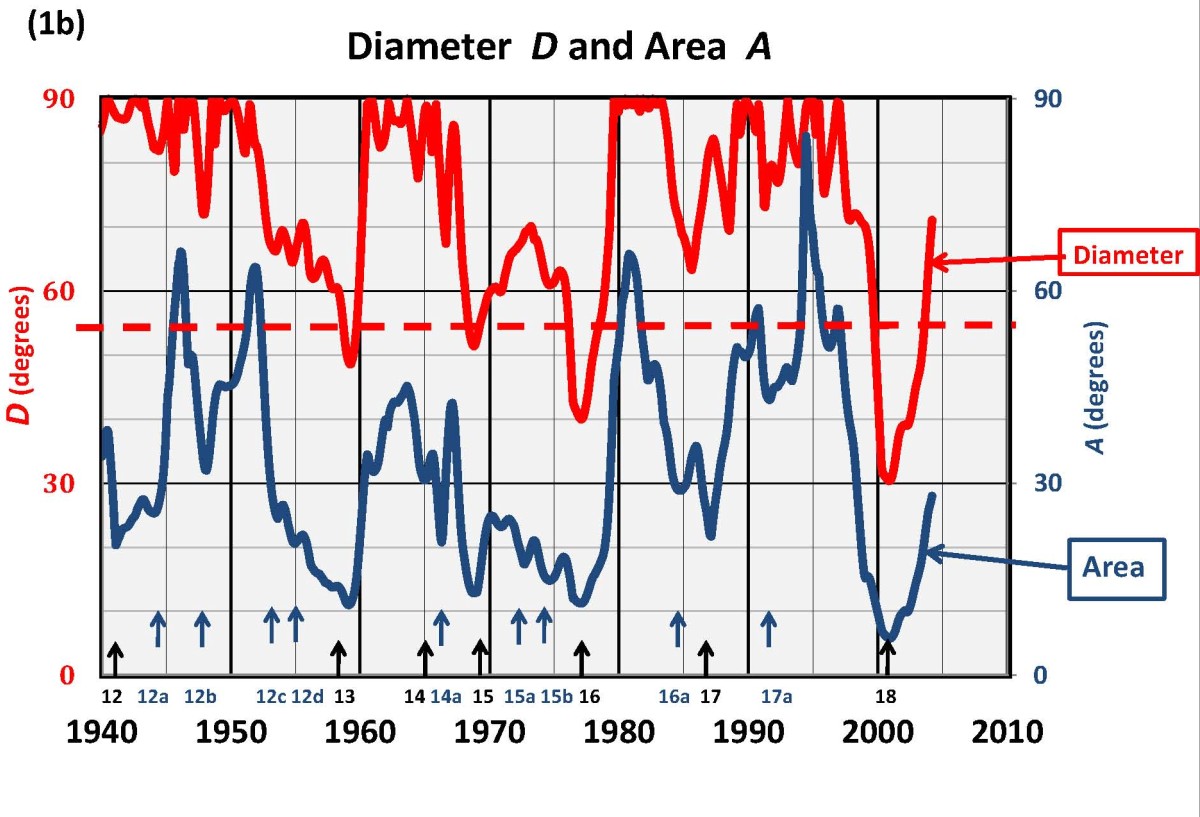}
\end{center}

\section{Summary}\label{Summary}
By using covariance on a set of time independent random variables or the covariance defined by the Pearson correlation on a set of time dependent variables we create metrics $\Gamma$ and $\hat{\Gamma}$ (resp.) on the unit sphere (resp. projective space) of the corresponding formal vector spaces. If $V$ is the $n$-dimensional formal vector space whose basis is the set of random variables $\left\lbrace X_1, \dots, X_n \right\rbrace$, we use $\Gamma$ or $\hat{\Gamma}$ to create $M_n$ or $M_{n,a}$, two measures of spread on values taken by the $X_i$. In section \ref{appendix} we give an explicit example of showing the use of $M_{2,a}$ on a global set of climate indices.

The two measures of spread, differ by at most a fixed multiplicative constant, so for theoretical purposes they are of equivalent use. However when applied they have can have different values. The volume of the convex hull created of $\left\lbrace X_1, \dots, X_n \right\rbrace$, given by $M_n$ is the most precise measure of the correlation of the $X_i$, however it is computationally difficult. The maximal volume of all possible $n$-simplices defined by the $X_i$, given by $M_{n,a}$, is a rougher measure of correlation. However $M_{n,a}$ is a simpler computation than $M_n$. 

In the 2-dimensional example, where all the vectors lie on the 2-sphere one can apply $M_{2.a}$, $M_2$, or $M_{1,a}= \text{Diameter}$. But in general $M_{1,a}$ is coarser than $M_{2,a}$ but is significantly easier to compute. For example, in \cite{Douglass2010} and section \ref{appendix} the use of $M_{2,a}$ yields much finer and cleaner results than the use of $M_{1,a}$. More generally in $n$-dimensions $M_{l}$ and $M_{l,a}$ for any $l \leq n$ and one sacrifices accuracy for ease.

\bibliographystyle{plain}
\bibliography{bib}
\end{document}